\documentclass{article}
\usepackage[squaren]{SIunits}
\usepackage{amsmath,amsthm,galois,amsfonts,amssymb,bm}
\usepackage{thm-restate}
\usepackage{todonotes}
\usepackage{algorithm,algpseudocode}
\usepackage{comment,paralist}
\usepackage[shortcuts]{extdash}

\usepackage[backend=bibtex]{biblatex}
\usepackage{dmbiblatex}
\bibliography{references}

\usepackage{listings}
\lstdefinelanguage{imp}[]{C}{morekeywords={assume,assert}}
%   literate={+}{{$+$}}1 {*}{{$*$}}1
%            {<=}{{$\leq$}}1 {>=}{{$\geq$}}1 {!=}{{$\neq$}}1 
%            {&&}{$\land$}1}
\lstset{basicstyle={\fontfamily{ppl}\selectfont\small},language=imp,mathescape=true,breaklines=true}

\usepackage{url}
\newcommand{\mytitle}{A simple abstraction of arrays and maps by program translation}
\usepackage[pdfauthor={David Monniaux, Francesco Alberti},pdftitle={\mytitle}]{hyperref}
\urlstyle{sf}

% MATH NOTATION
\newcommand{\functions}[2]{{#1} \rightarrow {#2}}
\newcommand{\parts}[1]{\mathcal{P}\left(#1\right)}
\newcommand{\abstr}[1]{{#1}^\natural}
\newcommand{\concr}[1]{{#1}^\flat}

\newcommand{\defn}{\stackrel{\vartriangle}{=}}
\newcommand{\ZZ}{\mathbb{Z}}
\newcommand{\NN}{\mathbb{N}}

\newcommand{\soft}[1]{\textsc{#1}}

   % print a hyperlinked DOI

\newcommand{\killspacebetweensectionandlisting}{\vspace{-1.5em}}

\newtheorem{lemma}{Lemma}

\begin{document}
\title{\mytitle}

\author{David Monniaux%
\thanks{The research leading to these results has received funding from the \href{http://erc.europa.eu/}{European Research Council} under the European Union's Seventh Framework Programme (FP/2007-2013) / ERC Grant Agreement nr.~306595 \href{http://stator.imag.fr/}{\mbox{``STATOR''}}}\\
{\small Univ. Grenoble Alpes, VERIMAG, F-38000 Grenoble, France}\\
{\small CNRS, VERIMAG, F-38000 Grenoble, France}
\and Francesco Alberti%
\thanks{This work has been carried out while the author was affiliated to the Universit\`a della Svizzera Italiana and supported by the Swiss National Science Foundation under grant no. P1TIP2\_152261.}\\
{\small Fondazione Centro San Raffaele, Milan, Italy}}

\maketitle

\begin{abstract}
We present an approach for the static analysis of programs handling arrays, with a Galois connection between the semantics of the array program and semantics of purely scalar operations.
The simplest way to implement it is by automatic, syntactic transformation of the array program into a scalar program followed analysis of the scalar program with any static analysis technique (abstract interpretation, acceleration, predicate abstraction,\dots).
The scalars invariants thus obtained are translated back onto the original program as universally quantified array invariants.
We illustrate our approach on a variety of examples, leading to the ``Dutch flag'' algorithm.

\end{abstract}

\section{Introduction}
\emph{Static analysis} aims at automatically discovering
%``interesting''
program properties.
%, which may help in proving the correctness of programs.
Traditionally, it has focused on dataflow properties (e.g. ``can this pointer be null?''), then on numerical properties (e.g. ``$2x+y \leq 45$ at every iteration of this loop'').
%
%\paragraph{Analyzing array programs}
When it comes to programs operating over \emph{arrays}, special challenges arise.
For instance, the \soft{Astr\'ee} static analyzer,%
\footnote{\cite{DBLP:journals/fmsd/CousotCFMMR09,BlanchetCousotEtAl_PLDI03,BlanchetCousotEtAl02-NJ} \url{http://www.astree.ens.fr} \url{http://absint.de/astree/}}
based on abstract interpretation and commercially used in the avionics, automotive and other industries, supports arrays simplistically: it either ``smashes'' all cells in a single array into a single abstract value, or expands an array of $n$ cells into $n$ variables;
in many cases it is necessary to fully unroll loops operating over an array in order to prove the desired property%
\footnote{Possible since \soft{Astr\'ee} targets safety-critical embedded systems where array sizes are typically fixed at system design and dynamic memory allocation is prohibited.}.

In general, however, analyzing arrays programs entails exhibiting inductive loop invariants with universal quantification over array indices.
Neither smashing nor expansion%\footnote{Unless $n$ is fixed, not too large, the array is expanded, and the loop unrolled $n$ times.}
%, as in some cases dealt with by \soft{Astr\'ee}.}
can prove, in general, that a simple initialization loop truly does work:
\begin{lstlisting}[caption={Simple array initialization},label={lst:init}]
int t[n]; for(int i=0; i<n; i++) t[i] = 0;
\end{lstlisting}
To derive the postcondition $\forall k. 0 \leq k < n \to t[k]=0$, one uses the loop invariant (in the Floyd-Hoare sense) $0 \leq i \leq n \wedge \forall k. 0 \leq k < i \to t[k]=0$.
The $0 \leq i \leq n$ part (or generalizations, e.g., filling the upper triangular part of a matrix) can be automatically inferred by many existing numeric analysis techniques.
In contrast, the $\forall k. 0 \leq k < i \to t[k]=0$ part is trickier and is the focus of this article.

\paragraph{Contribution}
We propose a generic method for analyzing array programs, which can be implemented
\begin{inparaenum}[i)]
\item
as a normal abstract domain
\item or by translating the program with arrays into a scalar program (a program without arrays), analyzing this program by any method producing invariants (back-end), and then recovering the array properties.
\end{inparaenum}
Its precision depends on the back-end analysis.
Our method has tunable precision and is formalized by Galois connections \cite{DBLP:journals/logcom/CousotC92} and,
contrary to most others, is not guided by a target property (here $\forall k. 0 \leq k < n \to t[k]=0$), though it can take advantage of it.
It can therefore be used to supply information to the end-user ``what does this program do?'' as opposed to be useful only for proving properties.
We demonstrate the flexibility of our approach on examples, using the acceleration procedure \soft{Flata}, the abstract interpreter \soft{ConcurInterproc} and \soft{CPAChecker} as back-ends.

We also show a form of \emph{completeness}: for any loop-free program, the precision of the analysis can be chosen so that it is exact with respect to universally quantified array properties (\S\ref{sec:completeness}). 

Our approach also applies to general maps $\mathit{keys} \rightarrow \mathit{values}$, though certain optimizations apply only to totally ordered index types.

\paragraph{Contents}
\hyperref[sec:galois]{Section~\ref*{sec:sentinel}} introduces our approach on one example.
\hyperref[sec:galois]{Section~\ref*{sec:galois}} discusses the Galois connections, and \autoref{sec:abstraction} gives the formal definition of our transformation algorithm and associated correctness and partial completeness proofs.
\hyperref[sec:examples]{Section~\ref*{sec:examples}} discusses the use of various backends on more examples. 
We finish with \hyperref[sec:related]{related work} and \hyperref[sec:conclusion]{conclusion}.
 
%%% Local Variables:
%%% mode: latex
%%% TeX-master: "main"
%%% End:

\section{Example: the Sentinel}
\label{sec:sentinel}
Our program transformation consists in
\begin{inparaenum}[i)]
\item a replacement of reads and writes parameterized by a number of distinguished indices, formalized in \autoref{sec:abstraction}
\item optionally, some ``focusing'' on a subset of index values
\item for certain backends (\textsc{ConcurInterproc}), the addition of observer variables implementing a form of partitioning.
\end{inparaenum}

\lstinputlisting[caption={A ``sentinel value'' marks the penultimate array cell},label=lst:sentinel]{sentinel_short.txt}
Obviously to us humans, this program cannot crash with an array access out of bounds, and the final value of \lstinline|i| is, at most, 998 (its value depends on how the ``\lstinline|initialize|'' procedure works).
How can we obtain this result automatically?

Let \lstinline|x| be a symbolic constant in $\{ 0, \dots, N-1\}$.
We abstract array \lstinline|t| by the single cell \lstinline|t[x]|, represented by variable \lstinline|tx|: reads and writes at position \lstinline|x| in \lstinline|t| translates to reads and writes to variable \lstinline|tx| and reads and writes at other positions are ignored.
Program~\ref{lst:sentinel} is thus abstracted as:%
\footnote{We have left out, for the sake of brevity, tests for array accesses out of bounds.}
\lstinputlisting%[caption={Abstracted sentinel},label=lst:sentinel_abstracted]
{sentinel_transformed_short.txt}

\soft{Flata} \cite{BozgaIK10,DBLP:conf/fm/HojjatKGIKR12}
can compute an exact input/output relation of this program (to demonstrate generality, we left \lstinline|N| unfixed and replaced \lstinline|N-2| by a parameter \lstinline|p|; we thus use a precondition $0 \leq x < N \land 0 \leq p < N$):
{\scriptsize\begin{equation}\tag{$F$}
\begin{aligned}
   & (p=x \land i\leq x-1 \land i\geq 0 \land N\geq x+1) \lor
     (i=x \land i\geq 0 \land N\geq p+1 \land i\leq p-1) \lor\\
   & (x\geq p+1 \land i\leq x-1 \land i\geq 0 \land N\geq x+1 \land p\geq 0) \lor
     (i=x \land i\leq N-1 \land i\geq p+1 \land p\geq 0) \lor\\
   & (i\geq x+1 \land N\geq p+1 \land i\leq N-1 \land x\leq p-1 \land x\geq 0) \lor  \\
   & (i\leq x-1 \land i\geq 0 \land N\geq p+1 \land x\leq p-1) \lor\\
   & (i=x \land i=p \land i\geq 0 \land i\leq N-1) \lor
     (x\geq p+1 \land i\geq x+1 \land i\leq N-1 \land p\geq 0)\\[-1em]
\end{aligned}
\end{equation}}

Note that our abstraction is valid \emph{whatever the value of $x$}.
This means that $(i,p,N)$ should be a solution of
$N>0 \land \forall x~ (0\leq x<N \Rightarrow F)$.
One can check that this quantified formula entails $i \leq p$.

Arguably, we have done too much work: the only cell in the array whose content matters
much is at index $p$ (\lstinline|N-2| in the original program).
Running \textsc{Flata} with $x=p$ yields a postcondition implying $i \leq p$.
Again, this is sound, because \emph{any} choice of $x$ yields a valid postcondition on $(i,p)$.\medskip

\section{Galois connections}
\label{sec:galois}
We shall now see that, for any choice of indices, there is a Galois connection $\galois{\alpha}{\gamma}$ \cite{CousotC92} between the concrete (the set of possible values of the vector of variables of the original program) and the abstract set of states (the set of possible values of the vector of variables in the transformed program).
In general, this Galois connection is not onto: there are abstract elements $\abstr{x}$ that include ``spurious'' states, and which may be reduced to a strictly smaller $\alpha \circ \gamma(\abstr{x})$.

If $A$ and $B$ are sets, $\functions{A}{B}$ denotes the set of total functions from $A$ to $B$, and $\parts{A}$ the set of parts of~$A$. If $A$ is finite, $\functions{A}{B}$ denotes the set of \emph{arrays} indexed by $A$; specifically, if $A$ is $\{ 1, \dots, l_1 \} \times \dots \times \{ 1, \dots, l_d \}$ then $\functions{A}{B}$ denotes the $d$-dimensional arrays of size $(l_1,\dots,l_d)$.
$f[x]$ denotes the application $f(x)$ where $f$ is a program array or map.

Our constructions easily generalize to arbitrary combinations of numbers of arrays and numbers of indices; let us see a few common cases.

\subsection{Single index}
Applied with a single index, our map abstraction is classical \cite[\S2.1]{DBLP:conf/iccl/CousotC94}.

\paragraph{Definition}
Let $f \in \functions{A}{B}$, we abstract it by its graph
$\alpha_1(f) = \{ (a,f[a]) \mid a \in A \}$; e.g., a constant array $\{ 1, \dots, n \} \rightarrow \ZZ$ with value $42$ is abstracted as $\{(i,42) \mid 1 \leq i \leq n\}$.

We lift $\alpha_1$ (while keeping the same notation) to a function from $\parts{\functions{A}{B}}$ to $\parts{A \times B}$: for $\concr{F} \subseteq \functions{A}{B}$, $\alpha_1(\concr{F}) = \bigcup_{f \in F} \alpha_1(f)$, otherwise said
\begin{equation}
\alpha_1(\concr{F}) = \left\{ (a,f[a]) \mid a \in A, f \in \concr{F} \right\}
\end{equation}
Let $\abstr{F} \subseteq A \times B$. Then we define its concretization $\gamma_1(\abstr{F})$:
\begin{equation}
\gamma_1(\abstr{F}) = \left\{ f \in \functions{A}{B} \mid \forall a \in A~ (a,f[a]) \in \abstr{F} \right\}
\end{equation}
It is easy to see that $(\parts{\functions{A}{B}}, \subseteq) \galois{\alpha_1}{\gamma_1} \parts{A \times B}$ is a Galois connection.%

\paragraph{Non-surjectivity and reduction}
Remark that $\alpha_1$ is not onto (if $|A| > 1$ and $|B| > 0$): there exist multiple $\abstr{F}$ such that $\gamma_1(\abstr{F}) = \emptyset$, namely all those such that $\exists a \in A \forall b \in B~(a,b) \notin \abstr{F}$.
For instance, if considering arrays of two integer elements ($A = \{0,1\}$, $B = \ZZ$), then $\abstr{F} = \{ (1, 0) \}$ yields $\gamma_1(\abstr{F})=\emptyset$: there is no way to fill the array at index~$0$.

Let us now see the practical implication. Assume that the program has a single array in $\functions{A}{B}$ and a vector of scalar variables ranging in $S$, then the memory state is an element of $\concr{X} \defn S \times (\functions{A}{B})$. The scalar variables are combined into our abstraction as follows:
{\small%
\begin{equation}
\parts{S \times (\functions{A}{B})} \cong \functions{S}{\parts{\functions{A}{B}}} \galois{\alpha_1^S}{\gamma_1^S} \functions{S}{\parts{A \times B}} \cong \parts{S \times A \times B} \defn \abstr{X},
\end{equation}}%
where $\alpha_1^S$ and $\gamma_1^S$ lift $\alpha_1$ and $\gamma_1$ pointwise.
Let $s \in S$. While the absence of any $(s,a,b) \in \abstr{x}$ ($\abstr{x} \in 
\abstr{X}$) indicates that there is no $(s,f) \in \gamma^S_1(\abstr{x})$, that is, scalar state $s$ is unreachable, the converse is \emph{not} true. Consider a single integer scalar variable $s$ and an array $a$ of length~2, and $\abstr{x} = \{ (0,0,1), (1,0,0), (1,1,2) \}$, representing the triples $(s,i,a[i])$. It would seem that $s=0$ is reachable, but it is not, because there is no way to fill the array at position $1$: there is no element in $\abstr{x}$ of the form $(0,1,b)$.

A \emph{reduction} is a function $\rho: \abstr{X} \rightarrow \abstr{X}$ such that $\gamma \circ \rho = \gamma$ and $\rho(\abstr{x}) \subseteq \abstr{x}$ for all~$\abstr{x}$. The strongest reduction $\rho_{\text{opt}}$ (the minimum for the pointwise ordering induced by $\subseteq$) is $\alpha \circ \gamma$. In the above, $\rho_{\text{opt}}(\abstr{x}) = \{ (1,0,0), (1,1,2) \}$; intuitively, the strongest reduction discards all superfluous elements from the abstract value.

\paragraph{Class of formulas}
Assume now that the vector of scalar variables $s_1,\dots,s_m$ lies within $S=\ZZ^m$, the index $a$ lies in $\{ 1,\dots,l_1 \} \times \dots \times \{ 1,\dots,l_D \}$, and the values $f[a]$ also lie in~$\ZZ$.
Consider a formula $\psi$ of the form
\begin{equation}\label{eqn:form1}
\forall a_1,\dots,a_d~ \phi(s_1,\dots,s_m,a_1,\dots,a_d, f[a_1,\dots,a_d])
\end{equation}
where $\phi$ is a first-order arithmetic formula (say, Presburger).

Then, $f \models \psi$ if and only if
$\alpha^S_1(f) \subseteq \{ ((s_1,\allowbreak \dots,\allowbreak s_m),\allowbreak (a_1,\dots,\allowbreak a_d), \allowbreak b ) \allowbreak \mid \phi(s_1,,\allowbreak \dots,,\allowbreak s_m,a_1,,\allowbreak \dots, \allowbreak, a_d, b) \}$.
The sets of program states expressible by formulas of form~\ref{eqn:form1}
thus map through the Galois connection to a sub-lattice of $\parts{\ZZ^m \times \ZZ^d \times \ZZ}$.
This construction may be generalized to any theory or combination of theories over the sorts used for scalar variables, array indices, and array contents.

Checking that an invariant $\gamma^S_1(G)$ entails $\psi$, when the set $G$ is defined by a formula $\Gamma$, just amounts to checking that $\Gamma \land \neg\psi$ is unsatisfiable.

\subsection{Several indices, one per array}
The above settings can be extended to several arrays. 
Let $f,g \in \functions{A}{B}$, we abstract them by the product of their graphs
$\alpha_1(f,g) = \{ (a,f[a],a',g[a']) \mid a,a' \in A \}$,
$\gamma_1(\abstr{x}) = \{ (f,g) \in (\functions{A}{B})^2 \mid \forall a,a' \in A~ (a,f[a],a',g[a']) \in \abstr{x} \}$.
This abstraction can express properties of the form
{\small%
\begin{equation*}
\forall a_1,\dots,a_d,a'_1,\dots,a'_d~ \phi(s_1,\dots,s_m,a_1,\dots,a_d,
f[a_1,\dots,a_d],a'_1,\dots,a'_d,g[a'_1,\dots,a'_d])
\end{equation*}}%

As an example, the property that up to index $k$, monodimensional array $f$ of length $n$ has been copied into array $g$ can be expressed as
$\forall a,a' \in \{ 1,\dots,n \}~ a<k \land a=a' \Rightarrow f[a]=g[a']$ within that class.

\subsection{Dual indices, same array}
\label{sec:dual_indices}
\paragraph{Definition}
Let $f \in \functions{A}{B}$, pose
$\alpha_2(f) = \{ (a,f[a],a',f[a']) \mid a,a' \in A \}$
and lift it to a function from $\parts{\functions{A}{B}}$ to 
$\parts{(A \times B)^2}$.
Let $\abstr{F} \subseteq (A \times B)^2$. Then we define its concretization $\gamma_2(\abstr{F})$:
\begin{equation}
\gamma_2(\abstr{F}) = \left\{ f \in \functions{A}{B} \mid \forall a,a' \in A~ (a,f[a],a',f[a']) \in \abstr{F} \right\}
\end{equation}
It is easy to see that $(\parts{\functions{A}{B}}, \subseteq) \galois{\alpha_2}{\gamma_2} \parts{A \times B}$ is a Galois connection.

If $A$ is totally ordered, it seems a waste to include both $(a,f[a],a',f[a'])$ and $(a',f[a'],a,f[a])$ in the abstraction for $a<a'$. We thus define
$\alpha_{2<}(f) = \{ (a,f[a],a',f[a']) \mid a < a' \in A \}$ and
$\gamma_{2<}(\abstr{x}) = \{ f \in \functions{A}{B} \mid \forall a,a' \in A~, a<a
' \Rightarrow (a,f[a],a',f[a']) \in \abstr{x} \}$.

\paragraph{Non-surjectivity}
Remark, again, that $\alpha_2$ is not onto.
Consider an array of integers of length $3$, that is, a function $f: \functions{\{1,2,3\}}{\ZZ}$. An analysis computes its abstraction as $\abstr{x} = \{ (1, 0, 2, 0),\allowbreak (1, 0, 3, 0),\allowbreak (2, 0, 3, 0),\allowbreak (1, 0, 3, 1) \}$; recall that each element of that set purports to denote $(a,f[a],a',f[a'])$ for $a<a'$.
At first sight, it seems that $f(3)=1$ is possible, as witnessed by the last element. Yet, there is then no way to fill $a[2]$: there is no $x$ such that $(2,x,3,1) \in \abstr{x}$. This last element is therefore superfluous, and we can conclude that $\forall x~f[x]=0$. (See \S~\ref{sec:dutch_flag} for a real-life example.)

If $\abstr{x}$ is defined by a first-order formula ($\abstr{x} = \{ (a,b,a',b') \mid \phi(a,b,a',b') \}$), then this reduction (removing all $a',b'$ such that for some $a<a'$ there is no way to fill $f[a]$) is obtained as: $\forall a \exists b~ a<a' \Rightarrow \phi(a,b,a',b')$.

\paragraph{Class of formulas}
Assume now that the vector of scalar variables $s_1,\dots,s_m$ lies within $S=\ZZ^m$, the indices $a<a'$ lie in $\{ 1,\dots, n\}$, and the values $f[a],f[a']$ also lie in~$\ZZ$.
Consider a formula $\psi$ of the form $\forall a,a'~a<a' \Rightarrow \phi(s_1, \allowbreak \dots,\allowbreak s_m, \allowbreak a, \allowbreak f[a], \allowbreak a', \allowbreak f[a'])$ where $\phi$ is a first-order arithmetic formula (say, Presburger). For instance, one may express \emph{sortedness}: $\forall a,a'~ a<a' \Rightarrow f[a]\leq f[a']$.

Then, $f \models \psi$ if and only if
$\alpha^S_{2<}(f) \in \{ ((s_1,\dots,s_m),a,b,\allowbreak a',b' ) \mid \phi(s_1,\allowbreak\dots,\allowbreak s_m,\allowbreak a,b,\allowbreak a',b') \}$.
The sets of program states expressible by formulas of the form
$\forall a,a'~ a<a' \Rightarrow \phi(s_1,\dots,s_m,\allowbreak a,f[a],a',f[a'])$
thus map through the Galois connection to a sub-lattice of $\parts{\ZZ^m \times (\ZZ \times \ZZ)^2}$.

%%% Local Variables:
%%% mode: latex
%%% TeX-master: "main"
%%% End:

\section{Abstraction of program semantics}
\label{sec:abstraction}
Our analysis may be implemented by a syntactic transformation of array operations into purely scalar operations.
In this section, for each operation (read, write) we describe the transformed operation and demonstrate the correctness of the transformation.
We then discuss precision.

Without loss of generality, we consider only elementary reads and writes (\lstinline|r=f[i];| and \lstinline|f[i]=r;| with \lstinline|i| a variable). More complex constructs, e.g. \lstinline|f[e]=r;| with \lstinline|e| an expression, can always be decomposed into a sequence of scalar operations and elementary read and writes, using temporary variables.

\subsection{Transformation and Correctness}
\label{sec:correctness}
\paragraph{Reading from the array}
Consider a program state composed of $(s,r,i,f)$ where $r \in B$, $i \in A$ are scalars, $s \in S$ is the rest of the state, and $f \in \functions{A}{B}$. Consider the instruction \lstinline|r=f[i];|, its semantics is:
\begin{equation}
(s,r,i,f) \xrightarrow{\lstinline|r=f[i];|} (s,f(i),i,f)
\end{equation}
We wish to abstract it by the program fragment:
\begin{lstlisting}[caption={Read from array},label={lst:read_from_array1}]
r = random(); if (i==a) { r=b; }
\end{lstlisting}

\begin{restatable}{lemma}{alphagammareadi}
The forward and backward semantics of Program~\ref{lst:read_from_array1} abstract the forward and backwards semantics of \lstinline|r=f[i];| by the $(\alpha_1^S,\gamma_1^S)$ Galois connection.
\end{restatable}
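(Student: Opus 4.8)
The plan is to unfold what "abstracts ... by the Galois connection" means concretely and then verify it by a direct set-theoretic computation. Recall the abstract domain here is $\abstr{X} = \parts{S' \times A \times B}$ where $S' = S \times B \times A$ bundles the scalar state $(s,r,i)$; the concrete domain is $\parts{S' \times \functions{A}{B}}$. For a concrete transition relation $\tau$ with forward image operator $\mathrm{post}_\tau$ and backward (pre-image) operator $\widetilde{\mathrm{post}}_\tau$, saying that a scalar transition $\tau^\natural$ \emph{abstracts} $\tau$ through $(\alpha_1^S,\gamma_1^S)$ means $\alpha_1^S \circ \mathrm{post}_\tau \circ \gamma_1^S \sqsubseteq \mathrm{post}_{\tau^\natural}$ (soundness of the forward abstraction) and symmetrically for the backward semantics, with $\tau^\natural$ being the transition relation denoted by Program~\ref{lst:read_from_array1}. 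So first I would write out $\mathrm{post}_{\tau^\natural}$ explicitly: from an abstract state containing a triple $((s,r,i),a,b)$, the instruction \lstinline|r=random(); if(i==a){r=b;}| produces, for every $r' \in B$, the triple $((s,r',i),a,b)$ when $i \neq a$, and the single triple $((s,b,i),a,b)$ when $i = a$.

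The key steps, in order. (1) Fix $\abstr{F} \subseteq S' \times A \times B$ and compute $\gamma_1^S(\abstr{F})$: it is the set of $((s,r,i),f)$ such that for all $a \in A$, $((s,r,i),a,f[a]) \in \abstr{F}$. (2) Apply the concrete read semantics: $\mathrm{post}_\tau$ sends $((s,r,i),f)$ to $((s,f(i),i),f)$. (3) Apply $\alpha_1^S$ to the result: this collects all triples $((s,f(i),i),a,f[a])$ for $a \in A$ and for $f$ ranging over the set from step (1). (4) Show this set is contained in $\mathrm{post}_{\tau^\natural}(\abstr{F})$ as computed above: given such a triple, either $a = i$, in which case $f(i) = f[a] = b$ where $b$ is the second component, and $((s,b,i),i,b)$ is exactly what the $i==a$ branch outputs from $((s,r,i),i,b) \in \abstr{F}$; or $a \neq i$, in which case $f(i)$ is some legal value of $B$ (legal because $f \in \gamma_1^S(\abstr{F})$ forces $((s,r,i),i,f(i)) \in \abstr{F}$, so the abstract instruction, run on that element of $\abstr{F}$, nondeterministically produces $r' = f(i)$ at index $a$ with value $b = f[a]$, since $((s,r,i),a,b)\in\abstr{F}$ too). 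That last observation — that both the index $a$ we are looking at and the index $i$ being read must \emph{simultaneously} have their entries present in $\abstr{F}$, which is what licenses the nondeterministic guess landing on exactly $f(i)$ — is the crux. (5) Repeat the analogous computation for the backward direction: here $\widetilde{\mathrm{post}}_\tau$ maps $((s,r,i),f)$ backward to states $((s,r',i),f)$ with $f(i) = r$ (and $r'$ arbitrary), and one checks $\alpha_1^S \circ \widetilde{\mathrm{post}}_\tau \circ \gamma_1^S \sqsubseteq \widetilde{\mathrm{post}}_{\tau^\natural}$ by a mirror-image argument; the backward semantics of \lstinline|r=random()| havocs $r$ and the backward semantics of the conditional imposes $r = b$ on the $i==a$ branch.

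The main obstacle I expect is not any single inequality but getting the bookkeeping of the bundled scalar state right and being careful that the statement is an \emph{inclusion} ($\sqsubseteq$), not an equality: the Galois connection is not onto (as the paper has just emphasized at length), so $\mathrm{post}_{\tau^\natural}$ may produce strictly more than $\alpha_1^S\circ\mathrm{post}_\tau\circ\gamma_1^S$ — in particular it happily manufactures triples $((s,r',i),a,b)$ even when $\abstr{F}$ is "unfillable" at some other index and hence $\gamma_1^S(\abstr{F}) = \emptyset$. So I would be careful to prove only the soundness inequality in each direction, and to note that this is exactly the notion of "abstracts" being claimed. A secondary subtlety is that for the \emph{backward} semantics of a nondeterministic program one must use the right (demonic/angelic) pre-image convention consistently with whatever convention was used for the forward direction; I would state which convention is in force and check the conditional's two branches under it, since the $i==a$ branch's backward action ("$r$ must have equalled $b$") is what ensures the read value is correctly constrained on the way back.
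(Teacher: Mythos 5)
Your proposal is correct and follows essentially the same route as the paper's proof: compute the forward image and the pre-image of an abstract element under the transformed program explicitly, then verify the soundness inclusion by the case split $a=i$ versus $a\neq i$ (your phrasing $\alpha_1^S \circ \mathrm{post}_\tau \circ \gamma_1^S \sqsubseteq \mathrm{post}_{\tau^\natural}$ is the adjoint form of the paper's $\mathrm{post}_\tau \circ \gamma_1^S \subseteq \gamma_1^S \circ \mathrm{post}_{\tau^\natural}$, so the two are equivalent). The paper additionally writes out the backward case in full, including the degenerate subcase where the post-value of $r$ differs from $f(i)$ and the concrete pre-image is empty, which your ``mirror-image'' sketch would need to spell out.
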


More generally, a read with several indexes $a_1,a_2,\dots$ is abstracted by
\begin{lstlisting}[label={lst:read_from_array2}]
r=random();if (i==$a_1$) assume(r==$b_1$); if (i==$a_2$) assume(r==$b_2$); $\dots$
\end{lstlisting}
The same lemma and proof carry to that setting.

\paragraph{Writing to the array}
Consider the instruction \lstinline|f[i]=r;|, its semantics is:
\begin{equation}
(s,r,i,f) \xrightarrow{\lstinline|f[i]=r;|} (s,r,i,f[i \mapsto r])
\end{equation}
We wish to abstract it by the program fragment:
\begin{lstlisting}[caption={Write to array},label={lst:write_to_array1}]
if (i==a) { b=r; }
\end{lstlisting}

\begin{restatable}{lemma}{alphagammawritei}
The forward and backward semantics of Program~\ref{lst:write_to_array1} abstract the forward and backwards semantics of \lstinline|f[i]=r;| by the $(\alpha_1^S,\gamma_1^S)$ Galois connection.
\end{restatable}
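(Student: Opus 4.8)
I will read ``$\abstr c$ abstracts $c$'' in the usual Galois sense — $\alpha_1^S \circ \overrightarrow{c} \sqsubseteq \overrightarrow{\abstr c} \circ \alpha_1^S$ for the forward collecting semantics, and symmetrically $\alpha_1^S \circ \overleftarrow{c} \sqsubseteq \overleftarrow{\abstr c} \circ \alpha_1^S$ for the backward one — and, by the adjunction, prove instead the equivalent and more convenient inclusions $\overrightarrow{c} \circ \gamma_1^S \sqsubseteq \gamma_1^S \circ \overrightarrow{\abstr c}$ and $\overleftarrow{c} \circ \gamma_1^S \sqsubseteq \gamma_1^S \circ \overleftarrow{\abstr c}$. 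The first step is pure bookkeeping: here the ``scalar part'' on which the connection of \S\ref{sec:galois} acts is $S' = S \times B \times A$ (carrying $s$, $r$, $i$) while the distinguished cell contributes $(a,b)$, so $\gamma_1^S(\abstr x) = \{(s,r,i,f) \mid \forall a \in A,\ (s,r,i,a,f[a]) \in \abstr x\}$; and I must pin down the collecting semantics of \lstinline|if (i==a) { b=r; }|: forward it is the deterministic, total map sending $(s,r,i,a,b)$ to $(s,r,i,a,r)$ when $i = a$ and to itself otherwise, while its backward (preimage) image of $\abstr y$ is $\{(s,r,i,a,b) \mid i=a,\ (s,r,i,a,r)\in\abstr y\} \cup \{(s,r,i,a,b) \mid i\neq a,\ (s,r,i,a,b)\in\abstr y\}$, where on the $i=a$ branch $b$ is left unconstrained.

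\textbf{Forward inclusion.} Given $(s,r,i,f)\in\gamma_1^S(\abstr x)$, I would note that $(s,r,i,a,f[a])\in\abstr x$ for every $a\in A$, run the fragment on each of those states to land in $\overrightarrow{\abstr c}(\abstr x)$, and observe that the value of $b$ thus obtained — namely $r$ if $a=i$ and $f[a]$ otherwise — is in both cases exactly $f[i\mapsto r][a]$. Hence $\overrightarrow{c}(s,r,i,f) = (s,r,i,f[i\mapsto r])$ is concretized by $\overrightarrow{\abstr c}(\abstr x)$, which gives the inclusion. That is the whole forward argument; the only delicate point is the case $a=i$, where the old cell value is overwritten — precisely the job of the assignment \lstinline|b=r|.

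\textbf{Backward inclusion.} Dually, given $(s,r,i,f)$ whose image $(s,r,i,f[i\mapsto r])$ lies in $\gamma_1^S(\abstr y)$ — i.e. $(s,r,i,a,f[i\mapsto r][a])\in\abstr y$ for all $a$ — I would check that $(s,r,i,a,f[a])$ lies in the fragment's backward image of $\abstr y$ for every $a$: for $a\neq i$ this holds because $f[a]=f[i\mapsto r][a]$, matching the $i\neq a$ branch; for $a=i$ it holds because $f[i\mapsto r][i]=r$, matching the $i=a$ branch, whose only requirement is $(s,r,i,i,r)\in\abstr y$ and which correctly imposes nothing on the pre-write value of $f[i]$. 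Thus $(s,r,i,f)\in\gamma_1^S(\overleftarrow{\abstr c}(\abstr y))$.

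\textbf{Expected difficulty and remarks.} I do not anticipate a real obstacle: the argument is essentially the two short computations above, and the load-bearing fact is merely that updating $f$ at index $i$ commutes with projection onto any single cell $a$ (the result depending only on $f[a]$ and on whether $a=i$). The one place to be careful is writing down the backward collecting semantics of the conditional exactly — in particular that it leaves $b$ free on the $i=a$ branch — and this parallels the proof of the read lemma. Worth recording in passing: non-surjectivity of $\alpha_1^S$ (as discussed in \S\ref{sec:galois}) blocks the equality $\alpha_1^S\circ\overrightarrow{c}\circ\gamma_1^S = \overrightarrow{\abstr c}$ in general, yet rerunning the same computation with $\alpha_1^S$ in place of $\gamma_1^S$ shows that $\alpha_1^S\circ\overrightarrow{c}=\overrightarrow{\abstr c}\circ\alpha_1^S$ and $\alpha_1^S\circ\overleftarrow{c}=\overleftarrow{\abstr c}\circ\alpha_1^S$ do hold exactly here — in contrast with reads, where the fragment is only sound. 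Finally, the extension to several distinguished indices (a sequence of guarded assignments \lstinline|if (i==$a_k$) { $b_k$=r; }|) goes through verbatim, one index at a time.
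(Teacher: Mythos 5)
Your proof is correct and takes essentially the same route as the paper's: compute the forward image and the preimage of an abstract element under Program~\ref{lst:write_to_array1} and check, via the case split $a=i$ versus $a\neq i$, that $(s,r,i,f[i\mapsto r])$ is concretized by the former and $(s,r,i,f)$ by the latter (your rendering of the backward semantics, requiring $(s,r,i,i,r)\in\abstr{y}$ on the $i=a$ branch with $b$ unconstrained, is in fact cleaner than the paper's, which has a typo there). One caveat on your closing remark: the forward equality $\alpha_1^S\circ\overrightarrow{c}=\overrightarrow{\abstr{c}}\circ\alpha_1^S$ does hold, but the backward one fails in general --- for a singleton $\{(s,r,i,g)\}$ with $g[i]\neq r$ the concrete preimage is empty, yet $\overleftarrow{\abstr{c}}\bigl(\alpha_1^S(\{(s,r,i,g)\})\bigr)$ still contains every $(s,r,i,a,g[a])$ with $a\neq i$; only the inclusion required by the lemma survives.
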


\noindent The same carries over to writing to an array with several indices, abstracted as:
\begin{lstlisting}[caption={Write to array, multiple indexes},label={lst:write_to_array2}]
if (i==a1) { b1=r; }  if (i==a2) { b2=r; } $\dots$
\end{lstlisting}

\paragraph{Operations on scalars}
Consider a program state composed of $(s,f)$ where $f \in \functions{A}{B}$ is an array and $s \in S$ is the rest of the state. Consider a scalar instruction $s \xrightarrow{P} s'$ and thus $(s,f) \xrightarrow{\concr{P}} (s',f)$.
We abstract $P$ as: $(s,a,b) \xrightarrow{\abstr{P}} (s',a,b)$ if $s \rightarrow{P} s'$. Essentially, operations on scalars are abstracted by themselves.
The following result generalizes immediately to $(\alpha_2,\gamma_2)$ etc.

\begin{lemma}\label{ref:scalar_abstr_correct}
The forward and backward semantics of $\xrightarrow{\abstr{P}}$ abstract those of $\xrightarrow{\concr{P}}$ by the $(\alpha_1^S,\gamma_1^S)$ Galois connection.
\end{lemma}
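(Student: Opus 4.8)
The plan is to unfold the forward (\emph{post}) and backward (\emph{pre}) collecting semantics on each side and verify soundness of the lifted transformer by a direct quantifier manipulation, using only that $P$ leaves the array component untouched and that $\alpha_1^S,\gamma_1^S$ are the pointwise liftings over $S$ of the single-index connection $(\alpha_1,\gamma_1)$. Concretely, I would first fix notation: read the concrete step as the relation $\concr{P}=\{\,((s,f),(s',f))\mid (s,s')\in P,\ f\in\functions{A}{B}\,\}$ on $\concr{X}=S\times(\functions{A}{B})$, its forward semantics being $\concr{X}\mapsto\{(s',f)\mid\exists s,\ (s,s')\in P\wedge(s,f)\in\concr{X}\}$ and its backward semantics $\concr{X}\mapsto\{(s,f)\mid\exists s',\ (s,s')\in P\wedge(s',f)\in\concr{X}\}$; similarly $\abstr{P}=\{\,((s,a,b),(s',a,b))\mid (s,s')\in P,\ a\in A,\ b\in B\,\}$ on $S\times A\times B$, with the analogous $\abstr{\mathrm{post}},\abstr{\mathrm{pre}}$. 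Throughout I would keep the abstract state written as a \emph{set of triples} $(s,a,b)$ via $\functions{S}{\parts{A\times B}}\cong\parts{S\times A\times B}$, so that ``$\gamma_1^S$ lifts $\gamma_1$ pointwise over $S$'' reads simply as $(s,f)\in\gamma_1^S(\abstr{x})\iff\forall a\in A\ (s,a,f[a])\in\abstr{x}$.

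The core step is the forward inclusion $\concr{\mathrm{post}}\circ\gamma_1^S\subseteq\gamma_1^S\circ\abstr{\mathrm{post}}$ (equivalently $\alpha_1^S\circ\concr{\mathrm{post}}\circ\gamma_1^S\subseteq\abstr{\mathrm{post}}$ pointwise), which I would prove elementwise: given $(s',f)\in\concr{\mathrm{post}}(\gamma_1^S(\abstr{x}))$, pick $s$ with $(s,s')\in P$ and $(s,f)\in\gamma_1^S(\abstr{x})$, i.e.\ $(s,a,f[a])\in\abstr{x}$ for every $a$; since $(s,s')\in P$, for each fixed $a$ we have $((s,a,f[a]),(s',a,f[a]))\in\abstr{P}$, hence $(s',a,f[a])\in\abstr{\mathrm{post}}(\abstr{x})$; as $a$ ranges over $A$ this says exactly $(s',f)\in\gamma_1^S(\abstr{\mathrm{post}}(\abstr{x}))$. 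The backward direction is the mirror image --- the same three lines with $s$ and $s'$ swapped and $\mathrm{post}$ replaced by $\mathrm{pre}$ --- using only that $\abstr{P}$ copies the index and value coordinates unchanged. The stated generalization to $(\alpha_2,\gamma_2)$ and to several indices is then verbatim the same, with $f$ replaced by the relevant tuple of index/value pairs, since $\abstr{P}$ still acts as the identity on every array-derived coordinate.

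I do not expect a genuine obstacle; the closest thing to one is a pair of points that deserve care. First, the lemma asserts \emph{soundness}, not optimality: $\abstr{\mathrm{post}}$ in general sits strictly above the best transformer $\alpha_1^S\circ\concr{\mathrm{post}}\circ\gamma_1^S$, because $\alpha_1\circ\gamma_1=\rho_{\text{opt}}$ may shrink an abstract value (the non-surjectivity phenomenon of \S\ref{sec:galois}), and one would recover exactness only by post-composing with the reduction --- which is not what the transformed program does. Second, one must keep straight the isomorphisms $\parts{S\times(\functions{A}{B})}\cong\functions{S}{\parts{\functions{A}{B}}}$ and $\functions{S}{\parts{A\times B}}\cong\parts{S\times A\times B}$ so that the pointwise lifting of $\alpha_1,\gamma_1$ is invoked at the right place; once the state is written as $(s,f)$ and the abstract element as a set of triples, both become transparent and the whole argument reduces to the single line of quantifier shuffling above.
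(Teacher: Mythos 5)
Your proof is correct: the elementwise check that $\abstr{P}$ acts as the identity on the index/value coordinates, combined with the pointwise reading of $\gamma_1^S$, gives exactly the soundness inclusions $\concr{\mathrm{post}}\circ\gamma_1^S\subseteq\gamma_1^S\circ\abstr{\mathrm{post}}$ and its backward mirror. The paper states this lemma without proof (treating it as immediate from ``operations on scalars are abstracted by themselves''), and your argument is the intended one, spelled out; your side remarks on non-optimality of $\abstr{\mathrm{post}}$ and on the bookkeeping of the isomorphisms are accurate but not needed for the claim.
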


\subsection{Precision loss}
``Forgetting'' the value of a scalar variable $v$ corresponds to $(s,v,f) \rightarrow (s,f)$.
This scalar operation may be correctly abstracted, as in \autoref{ref:scalar_abstr_correct}, by $(s,v,a,b) \rightarrow (s,a,b)$.
Surprisingly, applying this operation not only forgets the value of $v$, it may also enlarge the set of represented~$f$.

Example: $\abstr{x} = \{ (0,v,a,v) \mid a \in A \land v \in B \}$ abstracts by $(\alpha_1^S,\gamma_1^S)$ the set of triples $(0,v,f)$ where $f$ is a constant function of value $v$. Forgetting $v$ yields the set of pairs $(0,f)$ where $f$ is a constant function.
Applying $(s,v,a,b) \rightarrow (s,a,b)$ to $\abstr{x}$ yields
$\abstr{y} = \{ (0,a,v) \mid A \in A \land v \in B \}$, which concretizes
to the set $\{ (0,f) \mid f \in A \rightarrow B \}$. We have completely lost the ``constantness'' property.

\subsection{Relative completeness}
\label{sec:completeness}
We now consider the problem of \emph{completeness} of this abstraction, assuming that the back-end analysis is perfectly precise (thus \emph{relative completeness}).

Our analysis is incomplete in general.
Consider the following program:
\lstinputlisting[caption={Fill with zero, test zero},label={lst:zero_test_zero}]{zero_test_zero.txt}
In the second loop, the \lstinline|break| statement is never reached and thus at the end of the loop, $i=N$.
Yet, if we distinguish $n < N$ different indices $i_1,\dots,i_n$, we cannot prove that this statement is never reached: for there will exist $i \in \{ 0,\dots,N-1 \} \setminus \{ i_1,\dots,i_n \}$ such that \lstinline|t[i]| returns, in the abstracted program, an arbitrary value and thus the \lstinline|break| statement is considered possibly reachable.

In contrast, when the program is loop-free, the abstraction is exact with respect to the scalar variables, provided the number of indices used for the abstraction is at least the number of array accesses:

\begin{restatable}{theorem}{loopfreecompleteness}
Consider a loop-free array program $P$ with arrays $a_1,\dots,a_d$ such that the number of accesses to these arrays are respectively $\alpha_1,\dots,\alpha_d$.
By abstracting these arrays with, respectively, $n_1,\dots,n_d$ indices such that $n_i \geq \alpha_i$ for all $i$, we obtain a Galois connection \galois{\alpha}{\gamma} such that $\pi_S \circ \gamma \circ \abstr{P} \circ \alpha = \pi_S \circ \concr{P}$ where $\pi_S$ is the projection of the state to the scalar variables.
\end{restatable}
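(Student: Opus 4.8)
The asserted identity is an equality between two maps $\parts{\concr{X}}\to\parts{S}$, so the plan is to prove the two inclusions separately. The inclusion $\pi_S\circ\concr{P}\sqsubseteq\pi_S\circ\gamma\circ\abstr{P}\circ\alpha$ (soundness) needs no new work: by the read, write and scalar lemmas each elementary operation $\mathit{op}$ satisfies $\concr{\mathit{op}}\sqsubseteq\gamma\circ\abstr{\mathit{op}}\circ\alpha$, and, threading this through a finite sequence of operations with the help of $\abstr{\mathit{op}}\circ\alpha\sqsupseteq\alpha\circ\gamma\circ\abstr{\mathit{op}}\circ\alpha$ (valid since $\alpha\circ\gamma\sqsubseteq\mathrm{id}$) and monotonicity of $\gamma$, one obtains $\concr{P}\sqsubseteq\gamma\circ\abstr{P}\circ\alpha$ for the whole loop-free program; applying the monotone $\pi_S$ finishes. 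This direction does not use $n_i\ge\alpha_i$.

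The content is the reverse inclusion $\pi_S\circ\gamma\circ\abstr{P}\circ\alpha\sqsubseteq\pi_S\circ\concr{P}$, that is, relative completeness. I would isolate a \emph{faithful simulation lemma}: suppose the concrete program and the transformed program with tracked indices $\vec a=(a^{(j)}_k)$ are run in lock-step along one control path of $P$, the initial abstract tuple is $\alpha$-consistent with the initial concrete state, and every array access along that path is at an index occurring among the tracked indices of the array concerned (call such a run \emph{honest}); then the two runs stay synchronized — the scalar components are equal throughout and each $b^{(j)}_k$ equals the current value of $f_j$ at $a^{(j)}_k$. This is an induction along the path: an abstracted read returns the true cell value because one of its guards fires, an abstracted write updates exactly the tracked cells the concrete write touches, scalar instructions are literally themselves, and since the scalar components agree the two runs resolve every guard the same way, so they really do follow the same path. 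Here the hypotheses enter: $P$ loop-free means any control path contains at most $\alpha_j\le n_j$ accesses to array $j$ and hence touches at most $n_j$ distinct indices of it, so there is always room to let the tracked indices cover them and make the run honest.

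Granting this lemma I would conclude as follows. Let $\sigma'=(s',\vec g)\in\gamma(\abstr{P}(\alpha(\concr{X})))$; unfolding $\gamma$, for \emph{every} choice $\vec a$ the tuple $\langle s',\vec a,(g_j[a^{(j)}_k])\rangle$ lies in $\abstr{P}(\alpha(\concr{X}))=\bigcup_{\sigma_0\in\concr{X}}\abstr{P}(\alpha(\{\sigma_0\}))$ (using that $\alpha$ and $\abstr{P}$ are complete-join morphisms), hence is the endpoint of an execution of $\abstr{P}$ starting from a tuple $\langle s_0,\vec a,(f_j[a^{(j)}_k])\rangle$ with $(s_0,\vec f)\in\concr{X}$; note that the index block $\vec a$ is never modified by $\abstr{P}$, so it is the same at both ends. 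One then chooses $\vec a$ so that this witnessing execution is honest — run $P$ concretely from the relevant state, read off the finitely many (at most $\alpha_j\le n_j$) indices it touches, and take the $a^{(j)}_k$ to include them — and the simulation lemma identifies the abstract execution with the concrete one, forcing $s'$ to be the scalar outcome of a genuine concrete execution of $P$ from a state of $\concr{X}$, i.e. $s'\in\pi_S(\concr{P}(\concr{X}))$.

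The step I expect to be the main obstacle is exactly the apparent circularity in that last choice of $\vec a$: the set of indices one must track to make a run honest depends on the run, which in turn depends on the tracked indices (a read at an untracked index havocs a value that can steer the control flow downstream). This is benign when the index at each access is a function of the current scalar state — the usual situation, where one simply reads the accessed indices off the concrete run — but in general (nondeterministically computed indices, or several initial states whose witnessing executions use mutually incompatible index blocks) it has to be argued, for instance by taking, among all $\vec a$ and all witnessing executions, one that maximizes the length of its honest prefix and checking that a non-honest maximizer can be improved by redirecting a still-unused tracked slot to the first offending index — the counting that such a slot exists being precisely where $n_i\ge\alpha_i$ is spent. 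Everything else — the plumbing of the two inclusions and the step-by-step invariant of the simulation lemma, including the bookkeeping for writes and for repeated accesses to the same index — is routine.
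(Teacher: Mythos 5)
Your ``faithful simulation lemma'' is, almost verbatim, the paper's \emph{entire} proof: the paper takes a concrete trace $T$, records the index $\xi_{i,j}$ of each access, replays $T$ in the transformed program with $x_{i,j}=\xi_{i,j}$, observes that some guard then fires at every access so the replay is faithful, and declares the statement proved. Everything you add around that lemma---the two-inclusion decomposition, the unfolding of $\gamma$ into ``for every index block $\vec a$ there is a witnessing abstract run ending at $(s',\vec a,\vec g[\vec a])$'', and the attempt to select an $\vec a$ whose witness is honest---is exactly what the paper omits. So the step you flag as the main obstacle is not a detail you failed to fill in; it is the part of the argument that neither you nor the paper actually carries out.

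And that step is a genuine gap: the maximal-honest-prefix repair you sketch cannot close it in the stated generality, because the witnesses for distinct index blocks are \emph{different} runs and nothing forces them to converge to a common honest one. Concretely, take one array $a\equiv 0$ of length $3$ and the loop-free program $i:=\mathrm{nondet}\{0,1,2\};\ r:=a[i];\ \texttt{if}\,(r\neq 0)\ s:=1;\ i:=0;\ r:=0$, abstracted with the single tracked cell $(x,b)$ that the theorem allows ($n_1=\alpha_1=1$). For every value of $x$ there is a dishonest run (choose $i\neq x$, havoc $r$ to a nonzero value) ending in the scalar state $(i,r,s)=(0,0,1)$ with $b=0$ untouched; hence $(0,0,1)$ paired with the zero array survives $\gamma$, although no concrete run ever sets $s=1$. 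No index block admits an honest witness for that final state, so redirecting unused slots never terminates the improvement process with an honest run. The missing ingredient, which you half-identify when you say the situation is ``benign when the index at each access is a function of the current scalar state'', is a hypothesis that the \emph{final} scalar state determines (or retains) the indices accessed and the values read along the run---e.g.\ an SSA-like form in which no scalar is overwritten, and a single initial state or a deterministic program. Under such a hypothesis the witness for the $\vec a$ read off from $s'$ is forced to be honest and your simulation lemma does finish the proof; without it, both your argument and the paper's stall at the same point.
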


This completeness results extends to universally quantified array properties $\forall i_1,\dots~ P(i_1, \dots) \rightarrow Q(a_1[i_1], \dots)$: one appends to the original program (assuming $i_1,\dots,i_n$ are fresh, nondeterministically initialized):
\begin{lstlisting}
assume($(P(i_1, \dots)$); assert($Q(i_1,\dots)$);
\end{lstlisting}
 
%%% Local Variables:
%%% mode: latex
%%% TeX-master: "main"
%%% End:

\section{More examples}
\label{sec:examples}

\subsection{Matrix initialization}
\lstinputlisting[caption={Initialization of $m\times n$ matrix $a$ with value $v$},label=lst:array_init_2d]{array_init_2d_short.txt}

Again, we consider cell $a[x,y]$, where $0 \leq x < m$ and $0 \leq y < n$, and disregard all other cells.
One should not convert this procedure into a single control-flow graph, because the resulting numerical transition system does not have the ``flat'' structure expected by \soft{Flata}~\cite{db2}.
Instead, one must encode the inner loop as a separate procedure:
\lstinputlisting%[caption={Matrix initialization, inner loop separated},label=lst:matrix_init_2d_transformed2]
{array_init_2d_transformed2.txt}
\soft{Flata} then computes the exact input-output relation of \lstinline|inner_loop|, and finally the exact input-output relation of \lstinline|array_init_2d|:
{\small\begin{multline*}
(x=0 \land m=1 \land a'=v \land y \geq 0 \land n \geq y+1 )\lor
(a'=v \land x \geq 1 \land y \geq 0 \land m \geq x+1 \land n \geq y+1 )\lor\\
(n=1 \land x=0 \land y=0 \land a'=v \land m \geq 2 )\lor
(x=0 \land a'=v \land y \geq 0 \land m \geq 2 \land n \geq 2 \land n \geq y+1)
\end{multline*}}
Each disjunct implies $a'=v$, i.e., the final value of $a[x,y]$ is $v$. Again, because $(x,y)$ are symbolic constants with no assumption except that they are valid indices for $a$, this proves that all cells contain~$v$.
Assuming $0 \leq x < m \land 0 \leq y < n$ this formula may indeed be simplified automatically into $a'=v$.%
\footnote{We implemented a simplification algorithm for quantifier-free Presburger arithmetic inspired by \cite{Monniaux_LPAR08} so as to understand the output of \soft{Flata} and \soft{ConcurInterproc}.}
\medskip

\subsection{Slice initialization}\killspacebetweensectionandlisting

\lstinputlisting[caption={Initialize {$a[\mathit{low}\dots\mathit{high}-1]$} to {$v$}},label={lst:slice_init}]{slice_init_short.txt}

Again, we transform the program using a single index:
\lstinputlisting%[caption={Slice initialization, transformed},label=lst:slice_init_transformed]
{slice_init_transformed_short.txt}

\noindent \soft{Flata} produces as postcondition (assuming $0 \leq x <n \land 0 \leq \mathit{low} \leq \mathit{high} \leq n$):
{\small\begin{multline}
(\mathit{high}=\mathit{low} \land a'=a \land \mathit{high} \geq 0 \land n \geq \mathit{high} \land n\geq x+1 \land x \geq 0 )\lor\\
(a'=v \land \mathit{low} \leq x \land n \geq \mathit{high}\land \mathit{high}\geq x+1 \land \mathit{low} \geq 0 )\lor\\
(a'=a \land n \geq \mathit{high}\land \mathit{high}\geq low+1 \land \mathit{low} \geq x+1 \land x \geq 0 )\lor\\
(a'=a \land \mathit{high}\leq x \land n\geq x+1 \land \mathit{high}\geq low+1 \land \mathit{low} \geq 0)
\end{multline}}

Again, under the assumptions $0 \leq x < n$ and $0 \leq \mathit{low} \leq \mathit{high} \leq n$, this formula is equivalent to:
$
((\mathit{low} \leq x < \mathit{high}) \to a'=v) \land
(\neg (\mathit{low} \leq x < \mathit{high}) \to a'=a)
$.
Thus by quantification, the expected outcome:
\begin{multline}
(\forall x \in [\mathit{low}, \mathit{high})~ a'[x]=v) \land
(\forall x \notin [\mathit{low}, \mathit{high})
  \to a'[x]=a[x])
\end{multline}

\subsection{Array copy}\killspacebetweensectionandlisting

\lstinputlisting[caption={Copy array $a$ into array $b$},label=lst:array_copy]{array_copy_short.txt}

\noindent Take a single cell $a[x]$ in $a$ and a single cell $b[y]$ in $b$; after transformation:
\lstinputlisting%[caption={Array copy, transformed},label=lst:array_copy_transformed]
{array_copy_transformed.txt}

\paragraph{Flata}
\soft{Flata} yields:
$ (y \geq x+1 \land n \geq y+2 \land x \geq 0) \lor
(n=y+1 \land y \geq x+1 \land x \geq 0) \lor
(n=x+1 \land y \geq 0 \land y \leq x-1) \lor
(y \geq 0 \land y \leq x-1 \land n \geq x+2) \lor
(y=x \land b'=a \land n \geq x+2 \land x \geq 0) \lor
(y=x \land b'=a \land n=x+1 \land x \geq 0) $.
Assuming $0 \leq x < n \land 0 \leq y < n$, this is equivalent to $x=y \to a=b$. Thus by quantification, $\forall x, y . x=y \to a[x]=b[y]$, simplifiable into \bm{$\forall x . a[x]=b[x]$}.

\paragraph{Software model checking}
Many software model checkers, including \soft{CPAChecker}\footnote{\url{http://cpachecker.sosy-lab.org/}}, do not handle universally quantified array properties; yet we can use them as back-end analyses!
We translate the target property (here $\forall x. 0\leq x<n \to a[x]=b[x]$) into a precondition $x=y$ and an assertion on the postcondition $a=b$.
\soft{CPAChecker} then proves the property.%
\footnote{\texttt{scripts/cpa.sh -predicateAnalysis} after preprocessing with \texttt{assert.h}}

\lstinputlisting%[caption={Array copy, CPAChecker},label=lst:array_copy_CPAChecker]
{array_copy_cpachecker.txt}

\subsection{In-place array reversal}\killspacebetweensectionandlisting
\label{sec:reversal}

\lstinputlisting[caption={Array reversal},label=lst:array_reverse_inplace]{array_reverse_inplace_short.txt}

For this program, we need to distinguish the initial values in the array from the values during the computation (which finally yield the final values).
We use three indices $0 \leq x < n$, $0 \leq y \leq z < n$: $a$ is the initial value of $t[x]$, $b$ the current value of $t[y]$, $c$ the current value of~$t[z]$.

For each read, we check if the index of the read is equal to $y$ (respectively, $z$) and return $b$ (respectively, $c$) if this is the case.
If the index is equal to both $y$ and $z$, it is sound to return either $b$ or $c$; we chose to return $b$.
For each write, we test if the index is equal to $y$, in which case we write to $b$, and equal to $z$, in which case we write to $c$.
If it is equal to both $y$ and $z$, we write to both $b$ and $c$.

\lstinputlisting[caption={Array reversal, transformed},label=lst:array_reverse_inplace_transformed]{array_reverse_inplace_transformed.txt}

\paragraph{Flata}
\soft{Flata} takes {480\,\second}%
\footnote{All timings using one core of a {2.4\,\giga\hertz} {Intel \circledR} {Core\texttrademark} i3 running 32-bit Linux.}
to process this program, and outputs an input-output relation $\phi$ in disjunctive normal form with 292 disjuncts (not reprinted).
The output formula is very complicated, with explicit enumeration of many particular cases; the reason for the slowness and the size of the output formula seems to be that \soft{Flata} explicitly enumerates many cases up to saturation, with no attempt at intermediate simplifications.
We shall now explain what this formula entails.

Let $U$ be $0 \leq x,y,z < n \land y+z=n-1$.
Let $U_<$ be $U \land y < z \land z = x \land y+z=n-1$, then $\phi \land U_<$ is equivalent to $a=b \land U_<$.
This means that under the precondition $U_<$, Prog.~\ref{lst:array_reverse_inplace_transformed} has exact postcondition $a=b$.
By universal quantification, this means that $\forall x,y,z . U_< \to t[x]=t'[y]$, where $t$ is the input array to Prog.~\ref{lst:array_reverse_inplace} and $t'$ the output.
This formula may be simplified into
$
\forall x . 0 \leq x \land 2x \leq n-2 \to t[x] = t'[n-1-x]
$;
We can obtain similar formulas for the cases $y > z$ and $y = z$.
The three cases can can be summarized into
\begin{equation}\label{eqn:reverse}
\bm{\forall x . 0 \leq x < n \to t[x] = t'[n-1-x]}
\end{equation}

\paragraph{Flata, focused}
The above execution time and the complexity of the resulting formula seem excessive, if all that matters is when $(x=y \lor x=z) \land y+z=n-1$.
Indeed, some easy static analysis (by \soft{Flata} or another tool) shows that the array accesses within the loop are done at indices $i$ and $j$ that satisfy $0 \leq i \leq j < n$ and $i+j=n-1$.
Such a pre-analysis suggests to target the main analysis to two positions $t[y]$ and $t[z]$ in the current array, satisfying $0 \leq y \leq z < n$ and $y+z=n-1$.
The only positions $a[x]$ that matter in the original array are those that can be read precisely, that is, $x=y$ and $x=z$.

We therefore re-run the analysis with precondition $U$: $(0 \leq y \leq z < n \land y+z=n-1 \land x=y)$. \soft{Flata} runs for {6\,\second} and outputs a formula with 8 disjuncts, with $a=c$ in all disjuncts.
We thus have proved that
$\forall x,y,z. U \to t[x]=t'[z]$, which can be simplified into
\bm{$\forall z . 2z \geq n-1 \land z < n \to t'[z]=t[n-1-z] $}.

\begin{comment}
\begin{multline}
    (y=x \land z=x \land z=y \land n=z+y+1 \land b=c \land c=a \land n \leq 1 \land y \geq 0 \land z \geq 0 \land n \geq z+1 \land n \geq y+1) \lor\\
    (x=0 \land y=0 \land n=z+1 \land c=a \land z \geq 1 \land z \leq 2) \lor\\
    (x=0 \land y=0 \land n=z+1 \land c=a \land z \geq 4 \land 2 \mid z) \lor\\
    (x=0 \land y=0 \land n=z+1 \land c=a \land z \geq 3 \land 2 \mid 1+z) \lor\\
    (n=z+x+1 \land y=x \land n=z+y+1 \land c=a \land y \geq 0 \land z \geq 0 \land z \geq x+1 \land n \geq z+2 \land z \geq y \land n \geq y+1 \land z \leq x+2) \lor\\
    (n=2*z+1 \land y=x \land z=x \land z=y \land n=z+y+1 \land b=c \land c=a \land y \geq 0 \land z \geq 0 \land n \geq z+1 \land n \geq 3 \land n \geq y+1 \land 2 \mid 1+n) \lor\\
    (n=z+x+1 \land y=x \land n=z+y+1 \land c=a \land y \geq 0 \land z \geq 0 \land n \geq z+2 \land z \geq y \land n \geq y+1 \land z \geq x+4 \land 2 \mid z+x) \lor\\
    (n=z+x+1 \land y=x \land n=z+y+1 \land c=a \land y \geq 0 \land z \geq 0 \land n \geq z+2 \land z \geq y \land z \geq x+3 \land n \geq y+1 \land 2 \mid 1+z+x)
\end{multline}
\end{comment}

We may also run with the precondition, $(0 \leq y \leq z < n \land y+z=n-1 \land x=z)$ and get the remainder of the cases to conclude as in Formula~\ref{eqn:reverse}.

To summarize, when the exact analysis of the transformed program (that is, an exact analysis in the back-end) is too costly, one may choose to \emph{focus} the analysis by restricting the range of the indices ($x,y,z,\dots$) to some area $U$ considered to be ``meaningful'', for instance obtained by pre-analysis of the relationships between the indices of the array accesses in the program.
This is sound, since the quantification in the resulting formula is over the indices satisfying $U$.
Thus, a bad choice for $U$ may only result in a sound, but uninteresting invariant (the worst case is to take an unsatisfiable $U$: we then obtain a formula talking about an empty set of positions in the arrays, thus a tautology).

\paragraph{ConcurInterproc, focused}
\soft{Interproc}%
\footnote{\url{http://pop-art.inrialpes.fr/people/bjeannet/bjeannet-forge/interproc/}}
applies classical abstract interpretation (Kleene iteration accelerated with widenings, with possible narrowing iterations) over a variety of numerical abstract domains provided by the \soft{Apron} \cite{DBLP:conf/cav/JeannetM09} library\footnote{\url{http://apron.cri.ensmp.fr/library/}}
(intervals, ``octagons'' \cite{DBLP:journals/lisp/Mine06}, convex polyhedra \cite{halbwachs:tel-00288805,CousotH78}\dots).

\soft{ConcurInterproc}%
\footnote{\url{http://pop-art.inrialpes.fr/interproc/concurinterprocweb.cgi}}
extends it to concurrency (which we will not use here) and partitioning of the state space according to enumerated types, including Booleans.
In a nutshell, while \soft{Interproc} assigns a single abstract element (product of intervals, octagon, polyhedron) to each program location, \soft{ConcurInterproc} attaches $2^n$ abstract elements, where $n$ is the number of Booleans (or,more generally, one per concrete instantiation of the enumerated variables).
In order to achieve this at reasonable cost, the \soft{BDDApron} library uses a compact representation, where identical abstract elements are shared and the associated set of concrete instantiations is represented by a binary decision diagram.

Program~\ref{lst:array_reverse_inplace_transformed} contains no Boolean variable (or of any other enumerated type), thus directly applying \soft{ConcurInterproc} over it will yield one convex polyhedron at the end;
yet we need to express a disjunction of such polyhedra (e.g. there is the case where $x=y$, and the case where $x \neq y$, which may be subdivided into $x < y$ and $y < z$).
Furthermore, inside the loop one would have to distinguish $i < y$, $i = y$, $i > y$.
This is where, in other analysis of array properties by abstract interpretation  \cite{GopanRS05,HalbwachsP08,peron:tel-00623697,perrelle:tel-00973892,CousotCL11} one introduces ``slices'' or ``segments'' of programs, often according to syntactic criteria.
In our case, we wish to distinguish certain locations in the array (or combinations of several locations, as here with three indices $x,y,z$) according to more semantic criteria.

Our solution is to introduce \emph{observer} variables, which are written to but never read and whose final value is discarded, but which will guide the analysis and the partitioning performed.
Here, we choose to have one flag variable per access, initially set to ``false'', and set to ``true'' when the access has taken place.
As previously, we use a precondition $y+z=n-1 \land x=z$.

{
\lstinputlisting[basicstyle={\fontfamily{ppl}\selectfont\scriptsize},caption={Array reversal, transformed and instrumented},label=lst:array_reverse_inplace_instrumented]{array_reverse_inplace_instrumented.txt}
}

\noindent\soft{ConcurInterproc}, within 0.16\,\second, concludes that $a=b$.

\subsection{Dutch national flag}
\label{sec:dutch_flag}
\emph{Quicksort} is a divide-and-conquer sorting algorithm: pick a \emph{pivot}, swap array cells until the array is divided into two areas: elements less than the pivot, and elements greater than or equal to it; then recurse in both areas.
An improvement, in case many elements may be identical, is to swap the array into three areas: elements less than the pivot, equal to it, and greater than it, and recurse in the ``less'' and ``greater'' areas.
This \empty{three-way partition} is equivalent to the ``Dutch national flag problem'' \cite[ch.~14]{Dijkstra76}, of swapping pebbles of colors red, white and blue (corresponding to ``less'', ``equal'' and ``greater'') into three segments.

\lstinputlisting[label={lst:dutch_flag},caption={Dutch flag\protect\footnotemark}]{dutch_flag_short.txt}
\footnotetext{Courtesy of Wikipedia}

We transform this program with two indices $0 \leq x < y < n$ (remark that this is valid only if $n \geq 2$) with associated values $\mathit{datax}$ and $\mathit{datay}$, and instrument it with Boolean observer variables: for each read or write access to an index $i$, we keep a Boolean recording the value of predicate $x \leq i$ and one for $x \geq i$ (respectively for~$y$).
The values in the array are encoded as pebble colors LOW, MIDDLE, HIGH.

%The resulting \soft{ConcurInterproc} code is:
%\lstinputlisting[label={lst:dutch_flag},caption={Dutch flag, instrumented}]{dutch_flag_v4.txtoncurinterproc}

\soft{ConcurInterproc} computes a postcondition within 1\,\minute. The resulting formula $\phi$ has 52 cases; we will not print it here.
We check that $\phi \land x \leq p \to \mathit{datax} = \text{BLUE}$, meaning that finally,
$\bm{\forall x. 0 \leq x \leq p \to t[x] = \text{\textbf{BLUE}}}$
Similarly, $\phi \land y \geq q \to \mathit{datay} = \text{RED}$, thus
$\bm{\forall y. q \leq y < n \to t[y] = \text{\textbf{RED}}}$.
We would expect as well that $\forall x. p < x < q \to t[x] = \text{WHITE}$.
Yet, this does not immediately follow from~$\phi$:
$\phi \land p < y < q \land \mathit{datay} = \text{RED}$ is satisfiable!
Could there be red cells in the supposedly white area?

Note that $\phi$, for fixed values of $n,p,q$, encodes quadruples $(x,\mathit{datax},y,\mathit{datay})$, which encompass all possible values of $(x,t[x],y,t[y])$ for $x < y$.
In particular, for $t[y]=\text{RED}$ to be possible for given $n,p,q$, one must have suitable $t[x]$ for all $x < y$, such that $(x,t[x],y,\text{RED})$ satisfies $\phi$ for the same $n,p,q$.
In other words, to have a cell $t[y]=\text{RED}$ one must be able to find values $t[x]$ for all cells to the left of it.
We check that, indeed, $p < y < q \land \mathit{datay} \neq \text{WHITE} \land (\forall x. 0 \leq x < y \to \phi)$ is unsatisfiable,%
\footnote{From Presburger arithmetic, a decidable theory.}
meaning that
$\forall y. (p < y < q \land y > 0) \to t[y] = \text{WHITE}$.
Furthermore, $\phi \land x=0 \land x<q \land \mathit{datax} \neq \text{WHITE}$ has no solution.
We can thus conclude
$\bm{\forall y. p < y < q \to t[y] = \text{\textbf{WHITE}}}$.

Thus, we encountered a case of ``spurious'' solutions in the abstract element, due to the fact that the abstraction is not onto and that certain abstract elements can be reduced to a smaller element with the same concretization; which was achieved through quantification (see~\autoref{sec:dual_indices}).
This reduction can thus be performed through some form of \emph{quantifier elimination}.

\section{Related work}
\label{sec:related}

\paragraph{Acceleration}
For certain classes of loops, it is possible to compute exactly the transitive closure $\tau^+$ of the relation $\tau$ encoding the semantics of the loop, within a decidable class.
Acceleration for arrays has been studied by Bozga et al. \cite{BozgaHIKV09},
who obtain the transitive closure in the form of a \emph{counter automaton}.
The translation from counter automaton to array properties expressed in first-order logic then requires an abstraction step, resulting in a loss of precision.
Alberti et al. \cite{AlbertiGS13,AlbertiGS14} proposed a template-based solution.
Certain classes of $\tau$'s admit a definable
acceleration in Presburger arithmetic augmented with free function
symbols, at the price of nested quantifiers.
The $\exists^*\forall^*$ fragment of this theory is undecidable \cite{Halpern91}; thus again abstraction is needed to apply this technique in practice.
%obtain decidability.
Yet, there are cases where exact acceleration is possible~\cite{AlbertiGS15}.
Contrary to these approaches,
\begin{inparaenum}[i)]
\item ours does not put restrictions on the shape of the loop (and the program in general)
\item we perform the tunable abstraction first, with the rest of the analysis being delegated to a back-end (which can possibly use exact acceleration on scalar
programs~\cite{BozgaIK10}).
\end{inparaenum}

\paragraph{Abstract interpretation}
Various array abstractions  \cite{GopanRS05,HalbwachsP08,peron:tel-00623697,perrelle:tel-00973892,CousotCL11} distinguish \emph{slices}
or \emph{segments}, whose contents is then abstracted by another
abstract domain.
Depending on the approach, relationships between several slices may or
may not be expressed, and the partitioning may be syntactic or based on
some pre-analysis.
To our best knowledge,
none of these approaches work on multidimensional arrays or on
maps, contrary to ours.
One major difference between these approaches and ours is that ours
separates the analysis, both in theory and implementation, into
an abstraction that maps array programs to scalar programs and an
analysis for the scalar programs, while theirs are more ``monolithic''.
Even though they are parametric in abstract domains for values and
possibly indexes, they must be used inside an abstract interpreter based on
Kleene iterations with widening.
In contrast, ours can use any back-end analysis for scalar programs,
including exact acceleration, abstract interpretation
with Kleene iterations, policy iteration, and even, if a target property
is supplied, predicate abstraction (see CEGAR below).

Cox et al. \cite{DBLP:conf/sas/CoxCR14} do not target array programs per se, but
programs in highly dynamic object-oriented languages such as Javascript,
where an object is a map from fields to values and the
set of possible field names is not fixed.
Dillig et al. \cite{DilligDA10} overcome the dichotomy of strong
vs weak updates with \emph{liquid updates}.
Their approach is monolithic and cannot express properties such as sortedness.

\paragraph{Predicate abstraction and CEGAR}
\emph{Predicate abstraction} starts from the control structure of a
program and incrementally refines it by splitting control states
according to predicates chosen by the user \cite{FlanaganQ02} or, commonly, obtained by counterexample\-/guided abstraction refinement (CEGAR).
From an abstract counterexample trace not corresponding to a concrete counterexample,
they refine the model using local predicates constituting a step-by-step proof that 
this abstract trace does not match any concrete trace.
The hope is that this proof generalizes to more counterexample traces
and that the predicates eventually converge to define an inductive
invariant.
The predicates are obtained from \emph{Craig interpolants}
\cite{DBLP:conf/apn/McMillan05,McMillan06,McMillan11}
extracted from the proof of
unsatisfiability produced by a
\emph{satisfiability modulo theory} (SMT) solver.
The difficulty here is to generate Craig interpolants that tend to
generalize to inductive invariants, on \emph{quantified} formulas
involving arrays \cite{DBLP:conf/tacas/McMillan08}. We are interested in
predicates such as $\forall 0 \leq k < i,~t[k]=0$, which generalizes to
an inductive invariant on Program~\ref{lst:init}, as opposed to, say,
$t[0]=0 \land t[1]=0$, which is equivalent for $i=2$ but does not
generalize to arbitrary~$i$.
In order to achieve practical scalability, some work restrict
themselves to the inference of array predicates to certain forms, e.g.
\emph{range predicates} \cite{JhalaM07}.
Others tune the interpolating
procedure towards the generation of better interpolants
\cite{AlbertiBGRS14,Alberti_Monniaux_SAC-SVT2015}.
A major difference between our approach and those based on CEGAR is that
we do not require a ``target'' property to prove, which is necessary for
having counterexamples, though we can use one if needed.
If such a property is provided, our approach can use as a back-end a CEGAR system limited to scalar variables.

\paragraph{Theorem proving and SMT-based approaches}
The generation of invariants for programs with arrays has been also
studied using automated theorem proving \cite{HoderKV11,HoderKV10};
this approach is generally limited by the fact that theory reasoning (e.g. arithmetic) and superposition-based deductive reasoning (on which the Vampire first-order theorem prover is based \cite{KovacsV13}) are not yet efficiently integrated.
As opposed to \cite{BjornerMR13}, we do not rely on
quantifier-instantiation procedures.

% \paragraph{Combination approaches}
% Classical abstract interpretation and predicate abstraction using CEGAR have complementary strengths.
% Predicate abstraction is less constrained in the class of invariants that can be generated --- e.g. most numerical abstract domains used in classical abstract interpretation, such as intervals, octagons and polyhedra, only include \emph{convex} sets, while predicate abstraction easily represents non-convex sets using disjunctions.

% At the same time, CEGAR may fail to converge, or converge too slowly, due to the generation of too-specific Craig interpolants. Furthermore the quality of these interpolants is brittle: a given analyzer may quickly converge on a program because it finds the right interpolant straight away, while after a little change to the program the analyzer no longer finds that interpolant and fails to converge.
% In contrast, classical abstract interpretation (Kleene iterations with widening, in an abstract domain) reliably finds (possibly trivial or uninteresting) invariants.

\paragraph{Quantification}
Flanagan et al. \cite{FlanaganQ02} also use Skolem constants that they quantify
universally after analysis steps.
As opposed to us, they require the user to specify the
predicates on which the program will be abstracted.

\paragraph{Abstraction of sets of maps}
Our approach generalizes a classical abstraction of sets of maps
\cite[\S2.1]{DBLP:conf/iccl/CousotC94}.
Jeannet et al. \cite{DBLP:conf/sas/JeannetGR05} considered the problem of abstracting
sets of functions of signature $D_1 \rightarrow D_2$, assuming a
\emph{finite} abstract domain $A_1$ of cardinality $n$ abstracting
subsets of $D_1$ and an abstract domain $A_2$ abstracting subsets
of~$D_2^n$. In contrast, we do not make any cardinality assumption.

\paragraph{Partitioning}
Rival et al. \cite{DBLP:journals/toplas/RivalM07} introduced partitioning according to an abstraction of the history of the computation.
Our approach using observer variables for using \soft{ConcurInterproc} (\autoref{sec:reversal}) is akin to considering a finite abstraction of the trace of read/writes into a given array.

%%% Local Variables:
%%% mode: latex
%%% TeX-master: "main"
%%% End:

\section{Conclusion and Future Work}
\label{sec:conclusion}
We have shown that a number of properties of array programs can be proved by abstracting the array $a$ using a few symbolic cells $a[x], a[y], \dots$ by automatically translating the program into a scalar program, running a static analyzer over the scalar program and translating back the invariant for the original program.
In some cases, a form of quantifier elimination is used over the resulting formulas.

Our approach is not specific to arrays, and can be applied to any map structure $X \rightarrow Y$ (e.g. hash tables and other container classes).
A possible future extension is multiset properties, a multiset being map $X \rightarrow \NN$.

The main weakness of our approach is the need for a rather precise back-end analysis (for the scalar program obtained by translation).
Our experiments highlighted some inefficiencies in e.g. \soft{Flata} and \soft{ConcurInterproc}:
in the former, many paths can be enumerated and complicated formulas generated even though a much simpler equivalent form exists;
in the latter, polyhedra that are only slightly different (say, one constraint is different) are handled wholly separately.
This gives immediate directions for research for improving exact acceleration, as in \soft{Flata}, or disjunctions of polyhedra, as in \soft{ConcurInterproc}.
Another difficulty, if using \soft{ConcurInterproc} or other tools focusing on convex sets of integer vectors, is the need to use observer variables and/or an auxiliary pre-analysis to ``focus'' the main analysis.

We stress again that we obtained our results using unmodified versions of very different back-end analyzers (\soft{ConcurInterproc}, \soft{Flata}, \soft{CPAChecker}), which testifies to the flexibility of our approach.
Performance and precision improvements can be expected by modifying the back-end analyzers (e.g. precision could be improved by performing reduction steps during the analysis, rather than after the computation of the invariants).

%%% Local Variables:
%%% mode: latex
%%% TeX-master: "main"
%%% End:

\printbibliography

\appendix
\section{Proofs}
\alphagammareadi*
\begin{proof}
Consider an abstraction $\abstr{x} \subseteq S \times B \times A \times (A \times B)$ of $(s,r,i,f)$: $\forall a\in A~ (s,r,i,a,f[a]) \in \abstr{x}$.
The image of the set $\abstr{x}$ by that program is
$\abstr{y} = \{ (s, r', i, a, b) \mid r' \in B \land i \neq a \land (s, r, i, a, b) \in \abstr{x} \} \cup \{ (s, b, i, i, b) \mid (s, r, i, i, b) \in \abstr{x} \}$.
It is clear that $(s,f(i),i,f) \in \gamma(\abstr{y})$, otherwise said $\forall a\in A~ (s,f(i),i,a,f[a]) \in \abstr{y}$.

The pre-image of the set $\abstr{x}$ by that program is
$\abstr{z} = \{ (s, r, i, a, b) \mid r \in B \land i \neq a \land (s,r',i,a,b) \in \abstr{x} \} \cup \{ (s, r, i, i, b) \mid r \in B \land (s, b, i, i, b) \in \abstr{x} \}$.
Assume $(s,r',i,f) \in \gamma(\abstr{x})$ and $(s,r,i,f) \xrightarrow{\lstinline|r:=f[i]|} (s,r',i,f)$; then
\begin{inparaitem}
\item either $r' \neq f(i)$: then there is no such $(s,r,i,f)$, thus any such $(s,r,i,f) \in \gamma(\abstr{z})$Th
\item either $r' = f(i)$, then any $(s,r,i,f)$ fits; let us now prove $(s,r,i,f) \in \gamma(\abstr{z})$: let $a \in A$, then either $i=a$ and $(s,r,i,a,f[a]) \in \abstr{z}$ (second disjunct), or $i \neq a$ and $(s,r,i,a,\allowbreak f[a]) \in \abstr{z}$ (first disjunct).%\qedhere
\end{inparaitem}
\end{proof}

\alphagammawritei*
\begin{proof}
Consider an abstraction $\abstr{x} \subseteq S \times B \times A \times (A \times B)$ of $(s,r,i,f)$: $\forall a\in A~ (s,r,i,a,f[a]) \in \abstr{x}$.
The image of the set $\abstr{x}$ by that program is
$\abstr{y} = \{ (s,r,i,a,b) \mid i\neq a \land (s,r,i,a,b) \in \abstr{x} \}
\cup \{ (s,r,i,i,r) \mid  (s,r,i,a,b) \in \abstr{x} \}$.
Let us prove that $(s,r,i,f[i \mapsto r]) \in \gamma(\abstr{y})$.
Let $a \in A$. If $a \neq i$, then $\left(s,r,i,a,f[i \mapsto r](a)\right)=(s,r,i,a,f(a)) \in \abstr{y}$ (first disjunct); if $a=i$, then $\left(s,r,i,a,f[i \mapsto r](a)\right)=(s,r,i,i,r) \in \abstr{y}$ (second disjunct).

The pre-image of the set $\abstr{x}$ by that program is
$\abstr{z} = \{ (s,r,i,i,b' \mid b' \in B \land (s,r,i,i,b) \in \abstr{x} \}\cup
             \{ (s,r,i,a,b) \mid i \neq a \land (s,r,i,a,b) \in \abstr{x} \}$.
Assume $(s,r,i,f') \in \gamma(\abstr{x})$ and $(s,r,i,f) \allowbreak \xrightarrow{\lstinline|r:=f[i]|} \allowbreak (s,r,i,f')$; let us prove $(s,r,i,f) \in \gamma(\abstr{z})$.
Let $a \in A$. If $a=i$, then $(s,r,i,i,f(i)) \in \abstr{z}$ (first disjunct)
If $a\neq i$, then $(s,r,i,a,f(a))=(s,r,i,a,f'(a)) \in \abstr{z}$ (second disjunct).
\end{proof}

\loopfreecompleteness*
\begin{proof}
Consider an execution trace $T$ in $P$, and record the indices $\xi_{i,j}$ of the $j$-th (numbered syntactically) access to the $i$-th array.
Consider now the program $P'$ obtained by abstracting $P$ according to $\alpha_i$ indices for each array $a_i$, i.e. each read $r:=a_i[e]$ is transformed into
\begin{lstlisting}
r = random();
if ($e$==$x_{i,1}$) { assume(r==$b_{i,1}$); }  if ($e$==$x_{i,2}$) { assume(r==$b_{i,2}$); }
...
\end{lstlisting}
and each write $a_i[e]:=w$ as
\begin{lstlisting}
if ($e$==$x_{i,1}$) { $b_{i,1}$ = $w$; }  if ($e$==$x_{i,2}$) { $b_{i,2}$ = $w$; } $\dots$
\end{lstlisting}

Now replay $T$ in $P'$, with the same initial values, the same external and nondeterministic choices, and $x_{i,j}=\xi_{i,j}$.
Then, for any array access in the execution of $P'$, at least one of the tests is taken (the program does not fall into the case where none of the selected indices match the index for the read/write instruction).
In the case of a read $r:=a_i[e]$, the value read in $P'$ is then the same as the one read in $P$. Then, the execution of $P'$ faithfully mimics that of $P$.
The final values for the execution of $T$ in $P'$ are thus the same as those in $P$, which proves the statement.
\end{proof}

\end{document}